\title{A Direct Reduction from the Polynomial to the Adversary Method}
\author{
  Aleksandrs Belovs\thanks{Faculty of Computing, University of Latvia} 
}
\date{}
\newcommand{\Pis}[1]{\Pi_{\le #1}}
\newcommand{\stDelta}[1]{\stackrel{\Delta_{#1}}{\longmapsto}}
\renewcommand{\*}{^{\phantom{*}}}
\begin{document}
\maketitle

\begin{abstract}
The polynomial and the adversary methods are the two main tools for proving lower bounds on query complexity of quantum algorithms.
Both methods have found a large number of applications, some problems more suitable for one method, some for the other.

It is known though that the adversary method, in its general negative-weighted version, is tight for bounded-error quantum algorithms, whereas the polynomial method is not.
By the tightness of the former, for any polynomial lower bound, there ought to exist a corresponding adversary lower bound.  However, direct reduction was not known.

In this paper, we give a simple and direct reduction from the polynomial method (in the form of a dual polynomial) to the adversary method.
This shows that any lower bound in the form of a dual polynomial is actually an adversary lower bound of a specific form.
\end{abstract}

\section{Introduction}
Proving lower bounds on quantum query complexity is a task that has attained significant attention.
The reason is that it is essentially the only known way to prove limitations on the power of quantum algorithms.
For instance, Bennett, Bernstein, Brassard, and Vazirani~\cite{bennett:strengths} proved a quantum query lower bound for the OR function using what later became known as the hybrid method.
This demonstrates that there is no way to attain a better than Grover's~\cite{grover:search} quadratic speed-up for an NP-search problem if we treat the latter as a black-box (an oracle).
Powerful tools for proving quantum query lower bounds have been developed consequently: the polynomial method, and the adversary method, both in its original (positive-weighted) and improved (negative-weighted) formulations.

The polynomial method is due to Beals, Buhrman, Cleve, Mosca, and de Wolf~\cite{beals:pol}, and it was inspired by a similar method used by Nisan and Szegedy~\cite{nisan:bs, nisan:pol} to prove lower bounds on randomized query complexity.
The method builds on the following observation: if $\cA$ is a $T$-query quantum algorithm, then its acceptance probability on input $x$ can be expressed as a degree-$2T$ multivariate polynomial in the input variables $x_i$.
Beals \etal~\cite{beals:pol} used this method to re-prove the lower bound for the OR function from~\cite{bennett:strengths}, and establish other results like a tight lower bound for all total symmetric Boolean functions.
A landmark result obtained by this method is the lower bound for the collision problem by Aaronson and Shi~\cite{shi:collisionLower}.
Similarly as Bennett \etal's result~\cite{bennett:strengths}, it shows that a black-box approach to finding a collision in a hash function by a quantum computer is doomed as well.
This method has been popular ever after.

The original adversary method is due to Ambainis~\cite{ambainis:adv}, and it is an improvement on the aforementioned hybrid method.
The bound was strengthened by Ambainis himself~\cite{ambainis:polVsQCC} and Zhang~\cite{zhang:advPower} shortly afterwards.
One of the appealing features of this method is its convenient combinatorial formulation, which resulted in a number of applications~\cite{barnum:readOnceLower, Durr:quantumGraph, buhrman:productVerification, dorn:algebraicProperties}.
However, the original formulation of the adversary bound was subject to several important limitations~\cite{zhang:advPower}.

Partly in order to overcome these limitations, H\o yer, Lee, and \v Spalek generalised the adversary bound in~\cite{hoyer:advNegative}.
Departing from the semidefinite formulation of the original adversary bound by Barnum, Saks, and Szegedy~\cite{barnum:advSpectral}, H\o yer \etal showed that the same expression still yields a lower bound if one replaces non-negative entries by arbitrary real numbers.  
This \emph{negative-weighted} formulation of the bound is strictly more powerful than the positive-weighted one, but it lacks the combinatorial convenience of the latter.
The bound turned out to be useful for composed functions~\cite{hoyer:advNegative} and sum-problems~\cite{belovs:kSumLower, belovs:onThePower}.
In a series of papers~\cite{reichardt:formulae, reichardt:spanPrograms, reichardt:advTight}, Reichardt \etal surprisingly proved that the negative-weighted version of the bound is tight for bounded-error algorithms!

The polynomial method, on the other hand, is known to be non-tight.
Ambainis~\cite{ambainis:polVsQCC} constructed a first super-linear separation between the two for total Boolean functions.  This was later improved to an almost quartic separation by Aaronson, Ben-David, and Kothari~\cite{aaronson:cheatSheets}, which is essentially tight~\cite{aaronson:Huang}.
For partial functions, the separations can be even more impressive~\cite{belovs:polynomialSeparations}.

The history of relationship between the adversary and the polynomial methods is rather interesting.
For instance, the AND of ORs function allows for a very simple adversary lower bound~\cite{ambainis:adv}, but its polynomial lower bound is more complicated and was
only obtained more than a decade later.
It was achieved independently by Sherstov~\cite{sherstov:andortree}, and Bun and Thaler~\cite{ bun:andortree} using the technique of dual polynomials~\cite{sherstov:patternMethod}.
The latter is the dual of an approximating polynomial is the sense of linear programming.
Therefore, by strong duality, their optimal values are exactly equal, and every polynomial lower bound can, in principle, be stated as a dual polynomial.
The technique of dual polynomials has been used by Bun, Kothari, and Thaler~\cite{bun:polynomialStrikesBack} to prove strong lower bounds for a number of problems like $k$-distinctness, image size testing, and surjectivity.
The first of them was later improved in~\cite{mande:kDistinctness}.
Similarly strong adversary lower bounds for these problems are not known.

Since the adversary method is tight, for every polynomial lower bound, there ought to exist a similarly good adversary lower bound.
However, a direct reduction was not known.
In this paper, we prove a simple direct reduction, giving a mechanical way of converting every dual polynomial into an adversary lower bound of a specific form.
We hope that this connection will give a better understanding of both techniques, and should enable their combined use, which could result in better lower bounds.
Contrary to the majority of papers dealing with the general adversary method, all proofs in this paper are fairly elementary.

A related result is a direct reduction from the polynomial method to \emph{multiplicative} adversary by Magnin and Roland~\cite{magnin:relationLowerBounds}, while we give a reduction to a more widely-used \emph{additive} adversary.
We also note that our construction has similarities to a recent powerful lower bound technique by Zhandry~\cite{zhandry:record, liu:multiCollisions}.
It would be interesting to understand the connection between the two better.
\medskip

The following result is the cornerstone of our reduction. 

\begin{thm}
\label{thm:identically}
Let $X$ and $Y$ be sets of inputs, and $\mu$ and $\nu$ be probability distributions on $X$ and $Y$, respectively.
Assume that, for any assignment $\alpha$ of size $\le 2m$, we have 
\begin{equation}
\label{eqn:identically}
\Pr_{x\gets \mu} [x\sim \alpha] = \Pr_{y\gets \nu} [y\sim \alpha].
\end{equation}
Then, the quantum query complexity of distinguishing $X$ and $Y$ is $\Omega(m)$.
\end{thm}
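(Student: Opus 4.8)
The conclusion is of course immediate from the polynomial method: the acceptance probability of a $t$-query algorithm on input $x$ is a linear combination of the indicators $[x\sim\alpha]$ with $|\alpha|\le 2t$, so for $t\le m$ its average over $\mu$ equals its average over $\nu$ by~\eqref{eqn:identically}, contradicting a successful distinguisher. The content of the theorem, and the reason it underlies a reduction \emph{to} the adversary method, is to obtain the same $\Omega(m)$ bound from an \emph{explicit} negative-weight adversary matrix, so that is what I would construct.

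Distinguishing $X$ and $Y$ is the partial function $f : X\sqcup Y\to\{0,1\}$ with $f\equiv 0$ on $X$ and $f\equiv 1$ on $Y$, and by the adversary lower bound of H\o yer, Lee, and \v Spalek~\cite{hoyer:advNegative} it suffices to produce a Hermitian $\Gamma$ indexed by $X\sqcup Y$, vanishing on the $X\times X$ and $Y\times Y$ blocks, with $\|\Gamma\|/\max_i\|\Gamma\circ D_i\|=\Omega(m)$, where $D_i[z,z']=[\,z_i\ne z'_i\,]$ and $\circ$ is the entrywise product. For a partial assignment $\alpha$ set $|u_\alpha\rangle=\sum_{x\sim\alpha}\sqrt{\mu(x)}\,|x\rangle$ and $|w_\alpha\rangle=\sum_{y\sim\alpha}\sqrt{\nu(y)}\,|y\rangle$, so that $\langle u_\alpha|u_\beta\rangle=\Pr_{x\gets\mu}[x\sim\alpha\cup\beta]$ and $\langle w_\alpha|w_\beta\rangle=\Pr_{y\gets\nu}[y\sim\alpha\cup\beta]$ (with the convention that these vanish when $\alpha,\beta$ conflict), and note that these two numbers agree whenever $|\alpha\cup\beta|\le 2m$. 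I would take $\Gamma$ to be the matrix whose $X\times Y$ block is $\Gamma_{XY}=\sum_{|\alpha|\le m}c_\alpha\,|u_\alpha\rangle\langle w_\alpha|$ for suitable non-negative coefficients $c_\alpha$, the other blocks being zero; equivalently $\Gamma_{XY}[x,y]=\sqrt{\mu(x)\nu(y)}\sum_{x\sim\alpha,\ y\sim\alpha,\ |\alpha|\le m}c_\alpha$. Summing over all small assignments is what spreads the dependence of $\Gamma$ evenly over the $n$ coordinates, exactly as the all-ones matrix does for $\mathrm{OR}$.

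The lower bound on $\|\Gamma\|=\|\Gamma_{XY}\|$ is the easy direction: apply $\Gamma_{XY}$ to $|\nu^{1/2}\rangle=\sum_y\sqrt{\nu(y)}\,|y\rangle$. Since $\langle w_\alpha|\nu^{1/2}\rangle=\Pr_{y\gets\nu}[y\sim\alpha]=\Pr_{x\gets\mu}[x\sim\alpha]$ by~\eqref{eqn:identically}, we get $\Gamma_{XY}|\nu^{1/2}\rangle=\sum_\alpha c_\alpha\Pr_\mu[x\sim\alpha]\,|u_\alpha\rangle$, and its squared norm $\sum_{\alpha,\beta}c_\alpha c_\beta\Pr_\mu[x\sim\alpha]\Pr_\mu[x\sim\beta]\Pr_\mu[x\sim\alpha\cup\beta]$ is a quantity whose constituents have size $\le 2m$ --- this is where the $2m$ in the hypothesis, rather than $m$, gets used --- and which a suitable choice of the $c_\alpha$ makes $\Omega(m^2)$, so $\|\Gamma\|=\Omega(m)$. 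For the upper bound on $\|\Gamma\circ D_i\|$ the crucial structural fact is that if $i\in\mathrm{supp}(\alpha)$ then $(|u_\alpha\rangle\langle w_\alpha|)\circ D_i=0$, because any $x,y$ both consistent with $\alpha$ agree in coordinate $i$. Expanding $[\,x_i\ne y_i\,]=1-\sum_a[\,x_i=a\,][\,y_i=a\,]$, the surviving part of the $\alpha$-term (for $i\notin\mathrm{supp}(\alpha)$) reorganises into $\sum_{a\ne b}|u_{\alpha\cup\{i\mapsto a\}}\rangle\langle w_{\alpha\cup\{i\mapsto b\}}|$, which involves only assignments of size $\le m+1\le 2m$; one then has to bound the operator norm of the full sum of these, uniformly in $i$, by $O(\|\Gamma\|/m)$.

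That last bound is the main obstacle. Unlike the lower bound on $\|\Gamma\|$, it cannot be obtained by summing norms of the individual low-rank pieces --- there are far too many assignments $\alpha$ --- and genuinely needs cancellation between their contributions, exploiting that $|u_{\alpha'}\rangle\perp|u_{\beta'}\rangle$ whenever $\alpha',\beta'$ conflict; nailing the constant is what should pin down the precise coefficients $c_\alpha$, plausibly binomial-type weights balancing the levels $|\alpha|=0,1,\dots,m$. A recurring bookkeeping point is that every inequality may invoke~\eqref{eqn:identically} only for assignments of size at most $2m$, which constrains how the two estimates can be combined. Once both norm bounds are in place, \cite{hoyer:advNegative} gives $Q(f)=\Omega\bigl(\|\Gamma\|/\max_i\|\Gamma\circ D_i\|\bigr)=\Omega(m)$.
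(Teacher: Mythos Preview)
Your one-line polynomial-method argument is exactly the paper's first proof. The adversary construction you sketch is headed in the right direction --- the vectors $|u_\alpha\rangle,|w_\alpha\rangle$ are the paper's $v^X_\alpha,v^Y_\alpha$, and your observation that $\langle u_\alpha|u_\beta\rangle=\langle w_\alpha|w_\beta\rangle$ whenever $|\alpha|,|\beta|\le m$ (via~\eqref{eqn:identically} for $|\alpha\cup\beta|\le 2m$) is precisely the hinge of the argument --- but the proposal stops short of a proof: the coefficients $c_\alpha$ are never chosen, and you explicitly flag the bound on $\|\Gamma\circ D_i\|$ as ``the main obstacle.''

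The paper's idea that dissolves this obstacle is to abandon the diagonal ansatz $\sum_\alpha c_\alpha\,|u_\alpha\rangle\langle w_\alpha|$ altogether. Your inner-product identity says, by \rf{lem:isometry}, that there is an isometry $W$ from $\Pi^Y_{\le m}:=\spn\{|w_\alpha\rangle:|\alpha|\le m\}$ onto $\Pi^X_{\le m}$ with $W|w_\alpha\rangle=|u_\alpha\rangle$. The adversary matrix is then
\[
\Gamma \;=\; \sum_{k=0}^{m-1} W\,\Pi^Y_{\le k}
\;=\; W\sum_{k=0}^{m-1}\Pi^Y_{\le k},
\]
where $\Pi^Y_{\le k}$ is the \emph{orthogonal projector} onto $\spn\{|w_\alpha\rangle:|\alpha|\le k\}$, not a weighted sum of rank-one pieces. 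Since these projectors are nested, $\sum_{k<m}\Pi^Y_{\le k}=\sum_{k\le m}(m-k)\Pi^Y_k$ has norm exactly $m$, and composing with the isometry $W$ preserves this: $\|\Gamma\|=m$. For $\Delta_j$, let $\Pi^{\prime Y}_{\le k}$ be the projector onto the span of those $|w_\alpha\rangle$ with $j\in\mathrm{supp}(\alpha)$; one checks $\Pi^Y_{\le k-1}\subseteq\Pi^{\prime Y}_{\le k}\subseteq\Pi^Y_{\le k}$ and $(W\Pi^{\prime Y}_{\le k})\circ\Delta_j=0$. Hence $\Gamma\std W\sum_{k}(\Pi^Y_{\le k}-\Pi^{\prime Y}_{\le k})$, a sum of \emph{pairwise orthogonal} projectors composed with an isometry, so of norm $1$.

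Note that this $\Gamma$ is \emph{not} of the form you propose: the projector onto the span of the non-orthogonal family $\{|w_\alpha\rangle\}$ equals $\sum_{\alpha,\beta}(G^+)_{\alpha\beta}|w_\alpha\rangle\langle w_\beta|$ with $G$ the Gram matrix, and in general has off-diagonal terms. The ``cancellation'' you were hoping to engineer by tuning scalar weights $c_\alpha$ is delivered for free by working with projectors; that, together with the isometry $W$ transporting the clean $\bR^Y$ picture to $\bR^X$, is the missing ingredient.
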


The result itself is actually known.
To represent this, we will give two proofs in this paper.
The first one in \rf{sec:prelim} uses the method of dual polynomials and it is purely for illustrative purposes.
The second proof is the main technical contribution of this paper, and it is done using the adversary method.
Let us give an short outline here.
The proof uses the following collection of vectors:
\[
v^X_\alpha = \sum_{x\in X: x\sim\alpha} \sqrt{\mu_x} \ket|x>
\qqand
v^Y_\alpha = \sum_{y\in Y: y\sim\alpha} \sqrt{\nu_y} \ket|y>,
\]
where $\alpha$ is an assignment of the input variables.
By the indistinguishability, for every $k\le m$, there exists a linear isometry $W_{\le k}$ that maps each $v^X_\alpha$ with $|\alpha|\le k$ into $v^Y_\alpha$ and is zero on the orthogonal complement of these vectors.
Informally, $W_{\le k}$ gives a binding between $X$ and $Y$ for an algorithm that has made $k$ queries.
The adversary matrix is $\Gamma = \sum_{k=0}^{m-1} W_{\le k}$.
It is easy to see it has norm $m$, and we prove that $\|\Gamma\circ\Delta_j\|\le 1$ for all $j$.
This proof is contained in Sections~\ref{sec:delta} and~\ref{sec:2sets}.
In \rf{sec:delta}, we only consider the space $\bR^X$, and in \rf{sec:2sets}, we substitute $\bR^X$ with $\bR^Y$ using indistinguishability.

In \rf{sec:polynomial}, we show how to use this result to transform a dual polynomial into an adversary bound.
The idea is that a dual polynomial gives probability distributions $\mu$ and $\nu$ on two sets $\tX$ and $\tY$ that are ``close'' to $X$ and $Y$ and that satisfy the promise of \rf{thm:identically}.
We first prove the lower bound in the form of the adversary for distinguishing two probability distributions from~\cite{belovs:merkle}, as we think it is conceptually closer to the dual polynomial.
Obtaining a standard worst-case adversary bound is also easy.
It is just the restriction $\tGamma \elem[X,Y]$ of the matrix $\tGamma$ obtained in the second proof of \rf{thm:identically} for $\tX$ and $\tY$.

\section{Preliminaries}
\label{sec:prelim}
For a positive integer $m$, let $[m]$ denote the set $\{1,2,...,m\}$.
For a predicate $P$, we write $1_P$ to denote the indicator variable that is 1 is $P$ is true, and 0 otherwise.

We consider partial functions $f\colon D \to \{0,1\}$ with $D\subseteq [q]^n$.
We denote $X = f^{-1}(1)$ and $Y = f^{-1}(0)$.
Thus, the function $f$ distinguishes $X$ and $Y$.
An element $x=(x_1,x_2,\dots,x_n)\in [q]^n$, is called an \emph{input}, the set $[q]$ is called the \emph{input alphabet}, and $x_j\in[q]$ are individual input symbols.

A \emph{measure} on a finite set $X$ is a function $\mu$ from $X$ to the set of non-negative real numbers.  We denote the value of $\mu$ on $x\in X$ by $\mu_x$.
The measure is a \emph{probability distribution} if $\sum_{x\in X}\mu_x = 1$.
We use $x\gets\mu$ to denote that $x$ is sampled from the probability distribution $\mu$.

An \emph{assignment} is a function $\alpha\colon S\to[q]$ defined on a subset $S$ of the set of indices $[n]$.  We write $x\sim\alpha$ if $x\in X$ \emph{agrees} with the assignment $\alpha$, that is, $x_j = \alpha(j)$ for all $j\in S$.
The weight $|\alpha|$ of the assignment is the size of $S$.
It is possible to have an empty assignment $\emptyset$ of zero weight, in which case, every input string agrees to it.

\paragraph{Linear Algebra}
An $X\times Y$ matrix is a matrix with rows labelled by the elements of $X$ and columns by the elements of $Y$.  The element of an $X\times Y$ matrix $A$ at the intersection of the $x$-th row and the $y$-th column is denoted by $A\elem[x,y]$.
For $X'\subseteq X$ and $Y'\subseteq Y$, the matrix $A\elem[X', Y']$ is the restriction of $A$ to the rows in $X'$ and the columns in $Y'$.
We identify a subspace and the corresponding orthogonal projector, which we usually denote by $\Pi$ with additional decorations.
An \emph{isometry} is a linear operator that preserves inner product.
We need the following well-known result:
\begin{lem}
\label{lem:isometry}
Assume $\cH$ and $\cK$ are two inner-product spaces.
Let $(v_i)_{i\in A} \subseteq \cH$ and $(w_i)_{i\in A}\subseteq \cK$ be two collections of vectors indexed by the same index set $A$.
Assume $\ip<v_i,v_j> = \ip<w_i,w_j>$ for all $i, j\in A$.  Then, there exists an isometry $T\colon \spn_i v_i \to \spn_i w_i$ such that $Tv_i = w_i$ for all $i$.
\end{lem}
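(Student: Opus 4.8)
The plan is to construct $T$ in the only way possible: extend the assignment $v_i \mapsto w_i$ linearly to all of $\spn_i v_i$, and then reduce every verification (well-definedness, linearity, isometry, surjectivity) to a single computation with the Gram matrices. The point is that a hypothetical linear map sending each $v_i$ to $w_i$ is forced on $\spn_i v_i$, so there is nothing to choose; the content is that this recipe is consistent and metric-preserving.

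First I would address well-definedness, which is the heart of the matter. Every $u\in\spn_i v_i$ is a \emph{finite} linear combination $u=\sum_i c_i v_i$, and I set $Tu=\sum_i c_i w_i$. If $\sum_i c_i v_i = \sum_i c'_i v_i$, put $e_i = c_i - c'_i$, so that $\sum_i e_i v_i = 0$. Using bilinearity of the inner product and the hypothesis $\ip<v_i,v_j>=\ip<w_i,w_j>$,
\[
\Bigl\|\sum_i e_i w_i\Bigr\|^2 = \sum_{i,j} e_i e_j \ip<w_i,w_j> = \sum_{i,j} e_i e_j \ip<v_i,v_j> = \Bigl\|\sum_i e_i v_i\Bigr\|^2 = 0,
\]
hence $\sum_i e_i w_i = 0$, i.e.\ $\sum_i c_i w_i = \sum_i c'_i w_i$. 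So $T$ is well-defined, and it is linear by construction; since its image contains each $w_i = Tv_i$, it maps \emph{onto} $\spn_i w_i$.

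It then remains to check that $T$ is an isometry, which is the same computation run between two elements: for $u=\sum_i c_i v_i$ and $u'=\sum_j d_j v_j$,
\[
\ip<Tu,Tu'> = \sum_{i,j} c_i d_j \ip<w_i,w_j> = \sum_{i,j} c_i d_j \ip<v_i,v_j> = \ip<u,u'>.
\]
This gives $T\colon\spn_i v_i\to\spn_i w_i$ a surjective linear isometry with $Tv_i=w_i$, as claimed. (Over $\bC$ one conjugates the coefficients in the left slot throughout; nothing else changes. Note also that working with the span means only finitely supported coefficient families occur, so no completeness or topology is needed even if $A$ is infinite.)

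I do not expect a genuine obstacle here: the one place where an idea is required is the well-definedness step, and the Gram-matrix hypothesis is precisely the statement that every linear dependence among the $v_i$ is also a dependence among the $w_i$ (it actually gives the stronger fact that the two coefficient kernels coincide). Everything else is bookkeeping.
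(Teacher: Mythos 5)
Your proof is correct and complete: the well-definedness step via the vanishing Gram-matrix computation is exactly the crux, and linearity, surjectivity onto the span, and preservation of inner products all follow as you describe. The paper states this lemma without proof as a well-known fact, and your argument is the standard one, so there is nothing to compare against; no gaps.
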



\paragraph{Adversary Bound}
We use two different flavours of the negative-weighted adversary bound.
Here we give the canonical version from~\cite{hoyer:advNegative} and later we state the distributional version from~\cite{belovs:merkle}.

Assume we want to distinguish two sets of inputs $X, Y\subseteq [q]^n$ as above.
Let $\Gamma$ be a real $X\times Y$ matrix.
For $j\in[n]$, denote by $\Gamma\circ\Delta_j$ the matrix of the same dimensions as $\Gamma$ whose $(x,y)$-th entry is given by $\Gamma\elem[x,y]\cdot 1_{x_j\ne y_j}$.
In other words, the entries with $x_j = y_j$ are being erased (replaced by zeroes).

\begin{thm}[\cite{hoyer:advNegative}]
\label{thm:adv}
Assume that $\Gamma$ is an $X\times Y$ real matrix such that $\|\Gamma\circ \Delta_j\|\le 1$ for all $j\in [n]$.  Then, the (bounded-error) quantum query complexity of evaluating $f$ is $\Omega(\|\Gamma\|)$.
\end{thm}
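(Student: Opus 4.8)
The statement is the negative-weight adversary lower bound of H\o yer, Lee, and \v Spalek, and I would prove it by the usual progress-function (hybrid-style) argument. First I would reduce to the symmetric case: replace $\Gamma$ by the symmetric $(X\cup Y)\times(X\cup Y)$ matrix having $\Gamma$ and $\Gamma^{\mathsf{T}}$ in the two off-diagonal blocks and zeros on the diagonal blocks; this has the same operator norm and the same $\|\Gamma\circ\Delta_j\|$ for every $j$. Abusing notation I call it $\Gamma$ again, fix a real unit eigenvector $\delta$ with $\Gamma\delta=\|\Gamma\|\,\delta$ (flipping the sign of $\Gamma$ if its extreme eigenvalue is negative, which changes nothing), and, for a $T$-query algorithm on $f$ with error $\epsilon$, let $\ket|\psi_t^z>$ be its state on input $z\in X\cup Y$ after $t$ queries. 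The progress quantity is
\[
W_t=\sum_{z,z'}\Gamma\elem[z,z']\,\delta_z\delta_{z'}\,\ip<\psi_t^z,\psi_t^{z'}>.
\]

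The proof then rests on three estimates. \emph{Initial value:} all inputs start in the same state, so $\ip<\psi_0^z,\psi_0^{z'}>=1$ and $W_0=\delta^{\mathsf{T}}\Gamma\delta=\|\Gamma\|$. \emph{Final value:} since the algorithm distinguishes $X$ from $Y$, I would split each $\ip<\psi_T^x,\psi_T^y>$ through the two projectors of the final measurement, regroup the $X$-indexed and $Y$-indexed terms into two vectors living in (input-label space)$\otimes$(algorithm's workspace), and bound each resulting bilinear form by $\|\Gamma\|$ times the product of the norms of those two vectors; one of the two norms is always at most $\sqrt\epsilon$ because the algorithm rarely errs, yielding $|W_T|\le 2\sqrt\epsilon\,\|\Gamma\|$. \emph{Per-query change:} absorbing the input-independent unitaries into the states, one query turns $\ip<\psi_t^x,\psi_t^y>$ into $\ip<\phi_t^x,O_x^\dagger O_y\phi_t^y>$, and $O_x^\dagger O_y-I$ is block-diagonal over the queried index $j$ and vanishes on the block where $x_j=y_j$; assembling the $j$-th block gives $W_{t+1}-W_t=\sum_j\ip<a_j,((\Gamma\circ\Delta_j)\otimes R_j)\,a_j>$ with $\|R_j\|\le 2$, $a_j=\sum_z\delta_z\,\ket|z>\otimes P_j\phi_t^z$, and $\sum_j\|a_j\|^2=\|\delta\|^2=1$, whence $|W_{t+1}-W_t|\le 2\max_j\|\Gamma\circ\Delta_j\|\le 2$. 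Combining the three, $(1-2\sqrt\epsilon)\|\Gamma\|\le|W_0-W_T|\le 2T$, so $T=\Omega(\|\Gamma\|)$ for any sufficiently small constant error, and the standard success-amplification argument upgrades this to the bounded-error regime.

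The crux — and the reason negative weights need care — is the final-value estimate. The naive route, bounding $|W_T|$ by a small constant times $\|\,|\Gamma|\,\|$, the operator norm of the entrywise absolute value of $\Gamma$, is worthless, since $\|\,|\Gamma|\,\|$ may vastly exceed $\|\Gamma\|$; the whole point is to express every estimate — both the final value and the one-query change — as a bound involving only $\|\Gamma\|$ (respectively $\|\Gamma\circ\Delta_j\|$) applied to explicitly constructed vectors, never touching $|\Gamma|$. A minor additional nuisance is that over a non-Boolean alphabet the operator $R_j$ depends on the pair $(x_j,y_j)$, so the $j$-th block should be refined according to these values (or the shift folded into $a_j$); this is routine once the Boolean case is settled.
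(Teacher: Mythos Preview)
The paper does not prove this theorem; it is quoted as a known result from~\cite{hoyer:advNegative} and used as a black box, so there is no ``paper's own proof'' to compare against. Your sketch is essentially the standard progress-function argument from that reference (and its expositions), and as an outline it is correct: symmetrize, track $W_t=\sum_{z,z'}\Gamma\elem[z,z']\delta_z\delta_{z'}\ip<\psi_t^z,\psi_t^{z'}>$, show $W_0=\|\Gamma\|$, $|W_T|\le 2\sqrt{\eps}\,\|\Gamma\|$, and $|W_{t+1}-W_t|\le 2\max_j\|\Gamma\circ\Delta_j\|$. Your remark that the final-value step must be expressed through $\|\Gamma\|$ applied to explicitly built vectors (rather than via $\|\,|\Gamma|\,\|$) is exactly the point that distinguishes the negative-weight proof from the positive-weight one, and your treatment of the non-Boolean alphabet via refining the $j$-block by $(x_j,y_j)$ is the right fix.
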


The matrix $\Gamma$ from \rf{thm:adv} is called the \emph{adversary matrix}, and it is known that the bound of this theorem is tight~\cite{reichardt:advTight}.

As it can be guessed from the notation, the mapping $\Gamma\mapsto\Gamma\circ\Delta_j$
is usually expressed as an Hadamard product with a 01-matrix $\Delta_j$ of dimensions $X\times Y$.
However, we find it more convenient to think of it as a mapping.
In particular, we don't have to formally re-define the matrix $\Delta_j$ for matrices $\Gamma$ of different dimensions, and the matrix $\Delta_j$ almost never appears by itself.

The norm of the matrix $\Gamma\circ\Delta_j$ is not always easy to estimate.
The following trick from~\cite{lee:stateConversion} is of help here.  With some stretch of notation, we write 
$\Gamma\stDelta j B$ if $(\Gamma-B)\circ\Delta_j = 0$.
In other words, we are allowed to arbitrary change the $(x,y)$-entries of $\Gamma$ with $x_j=y_j$ in order to obtain $B$.
The idea is as follows:
\begin{prp}
\label{prp:trick}
For any $B$ with $\Gamma\stDelta j B$, we have $\|\Gamma\circ\Delta_j\| \le 2\|B\|$.
Moreover, if $f$ is a Boolean function, i.e., $D\subseteq \cube$, then $\|\Gamma\circ\Delta_j\| \le \|B\|$.
\end{prp}
Hence, we can bound $\|\Gamma\circ\Delta_j\|$ from above by estimating $\|B\|$, which is often easier.

\paragraph{Distributional Adversary}
We also use the version of the adversary bound for distinguishing two probability distributions.
This version is rather versatile as it allows the probability distributions to overlap and to have arbitrary acceptance probabilities.
 
\begin{thm}[\cite{belovs:merkle}]
\label{thm:advDistrib}
Let $D = [q]^n$.
Assume $\cA$ is a quantum algorithm that makes $T$ queries to the input string $x=(x_1,\dots,x_n)\in D$, and performs a measurement at the end with two outcomes 'accept' or 'reject'.
Let $\mu$ and $\nu$ be two probability distributions on $D$,
and denote by $s_\mu$ and $s_\nu$ the acceptance probability of $\cA$ when $x$ is sampled from $\mu$ and $\nu$, respectively.  Then,
\begin{equation}
\label{eqn:adv}
T = \Omega\s[ \min_{j\in[n]} \frac{\delta_\mu^*\Gamma\delta_\nu\* -  \tau(s_\mu,s_\nu) \|\Gamma\|} {\|\Gamma\circ\Delta_j \|} ],
\end{equation}
for any $D\times D$ matrix $\Gamma$ with real entries. 
Here,
\begin{equation}
\label{eqn:deltamu}
\delta_\mu\elem[x] = \sqrt{\mu_x}
\qqand
\delta_\nu\elem[y]=\sqrt{\nu_y}
\end{equation}
are unit vectors in $\bR^D$, and
\begin{equation}
\label{eqn:tau}
\tau(s_\mu,s_\nu) = \sqrt{\strut s_\mu s_\nu} + \sqrt{\strut (1-s_\mu)(1-s_\nu)} \le 1 - \frac{|s_\mu-s_\nu|^2}{8}.
\end{equation}
\end{thm}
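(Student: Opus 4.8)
The plan is a progress-function (hybrid) argument of the standard adversary type, now weighted by $\mu$ and $\nu$. Let $\ket|\psi^t_x>$ be the workspace state of $\cA$ on input $x$ after $t$ queries, so that $\ket|\psi^0_x>=\ket|\psi^0>$ is a fixed state independent of $x$, and let $\Pi$ be the projector defining the ``accept'' outcome of the final measurement, so $s_\mu = \sum_x \mu_x\|\Pi\ket|\psi^T_x>\|^2$ and likewise for $\nu$. Define the progress quantity
\[
P_t \;=\; \sum_{x,y\in D}\Gamma\elem[x,y]\,\sqrt{\mu_x\nu_y}\;\ip<\psi^t_x,\psi^t_y>.
\]
Because all pairwise inner products equal $1$ at $t=0$, we have $P_0=\delta_\mu^*\Gamma\delta_\nu$. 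The argument then has three parts: (i) trivially, $|P_0|-|P_T|\le|P_0-P_T|$; (ii) an upper bound $|P_T|\le \tau(s_\mu,s_\nu)\,\|\Gamma\|$; (iii) a per-query bound $|P_{t+1}-P_t|\le 2\max_{j\in[n]}\|\Gamma\circ\Delta_j\|$. Telescoping (iii) over the $T$ queries and combining with (i) and (ii) gives $\delta_\mu^*\Gamma\delta_\nu-\tau(s_\mu,s_\nu)\|\Gamma\|\le 2T\max_j\|\Gamma\circ\Delta_j\|$, which is exactly the claimed bound after absorbing the constant into the $\Omega$.

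The common engine for (ii) and (iii) is a single Hadamard-product estimate: if $G$ is the Gram matrix of a family of vectors, $G\elem[x,y]=\ip<u_x,u_y>$, then for any real vectors $a,b\in\bR^D$,
\[
\bigl|\,a^*(\Gamma\circ G)\,b\,\bigr|\;\le\;\|\Gamma\|\cdot\Bigl(\textstyle\sum_x a_x^2\|u_x\|^2\Bigr)^{1/2}\Bigl(\textstyle\sum_y b_y^2\|u_y\|^2\Bigr)^{1/2},
\]
proved by writing $a^*(\Gamma\circ G)b=\ip<\tilde a,(\Gamma\otimes I)\tilde b>$ with $\tilde a=\sum_x a_x\ket|x>\otimes u_x$ and $\tilde b=\sum_y b_y\ket|y>\otimes u_y$, and using $\|\Gamma\otimes I\|=\|\Gamma\|$ together with $\|\tilde a\|^2=\sum_x a_x^2\|u_x\|^2$. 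For (ii): split $\ket|\psi^T_x>=\ket|\psi^{T,\mathrm{acc}}_x>+\ket|\psi^{T,\mathrm{rej}}_x>$ via $\Pi$ and its complement, so that $\Gamma_T:=\Gamma\circ\bigl(\ip<\psi^T_x,\psi^T_y>\bigr)_{x,y}$ splits as $\Gamma\circ G^{\mathrm{acc}}+\Gamma\circ G^{\mathrm{rej}}$ with $G^{\mathrm{acc}},G^{\mathrm{rej}}$ Gram matrices. Applying the estimate with $a=\delta_\mu$, $b=\delta_\nu$ and the identities $\sum_x\mu_x\|\ket|\psi^{T,\mathrm{acc}}_x>\|^2=s_\mu$, $\sum_x\mu_x\|\ket|\psi^{T,\mathrm{rej}}_x>\|^2=1-s_\mu$ (and the same for $\nu$), the triangle inequality yields $|P_T|\le\|\Gamma\|\bigl(\sqrt{s_\mu s_\nu}+\sqrt{(1-s_\mu)(1-s_\nu)}\bigr)=\tau(s_\mu,s_\nu)\,\|\Gamma\|$.

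For (iii), note that the input-independent unitaries applied between queries preserve every inner product $\ip<\psi^t_x,\psi^t_y>$, so $P_t$ changes only at the query step. Writing the pre-query state as $\ket|\psi_x>=\sum_{j\in[n]}\ket|j>\otimes\ket|\phi_{x,j}>$ and using that the oracle acts on the $j$-th branch only through $x_j$ (hence acts as the identity on that branch whenever $x_j=y_j$), one finds $\ip<\psi^{t+1}_x,\psi^{t+1}_y>-\ip<\psi^t_x,\psi^t_y>=\sum_{j:\,x_j\ne y_j}\bigl(\ip<O_{x_j}\phi_{x,j},O_{y_j}\phi_{y,j}>-\ip<\phi_{x,j},\phi_{y,j}>\bigr)$; multiplying by $\Gamma\elem[x,y]\sqrt{\mu_x\nu_y}$ and summing, the $j$-th term involves precisely $\Gamma\circ\Delta_j$. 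Bounding the two Hadamard products in each $j$-term with the estimate above (the relevant vectors being $\ket|\phi_{x,j}>$ and their unitary images $O_{x_j}\ket|\phi_{x,j}>$, which have equal norms), and setting $p^\mu_j=\sum_x\mu_x\|\ket|\phi_{x,j}>\|^2$ with $\sum_j p^\mu_j=1$ (and likewise for $\nu$), gives $|P_{t+1}-P_t|\le 2\sum_j\|\Gamma\circ\Delta_j\|\sqrt{p^\mu_j p^\nu_j}\le 2\max_j\|\Gamma\circ\Delta_j\|$ after one Cauchy--Schwarz over $j$. I expect the only real bookkeeping difficulty to sit in (iii): setting up the oracle for a general alphabet $[q]$ cleanly enough that the ``supported on $x_j\ne y_j$'' claim is transparent, and keeping track of the two normalization identities $\sum_j p^\mu_j=\sum_j p^\nu_j=1$ that make the last Cauchy--Schwarz collapse to $1$.
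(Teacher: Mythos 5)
This theorem is not proved in the paper at all --- it is imported as a black box from \cite{belovs:merkle} --- so there is no in-paper proof to compare against; your argument is the standard weighted progress-function proof (the one given in the cited reference), with the Gram-matrix/Hadamard-product estimate doing the work both for the final-measurement bound $|P_T|\le\tau(s_\mu,s_\nu)\|\Gamma\|$ and for the per-query bound, and it is correct. The only piece of the statement you do not address is the elementary side inequality $\tau(s_\mu,s_\nu)\le 1-|s_\mu-s_\nu|^2/8$, which is not needed for the main bound and follows from a routine calculation.
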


\paragraph{Polynomials}
In the polynomial method, we have to assume that the function $f\colon D\to\bool$ is Boolean: $D\subseteq\cube$.
If this does not hold, one has to make the function Boolean.  A popular option is to introduce new variables $\widetilde{ x_{i,a}}$ with $i\in[n]$ and $a\in[q]$, defined by $\widetilde {x_{i,a}} = 1_{x_i=a}$.

For $S\subseteq [n]$, the corresponding \emph{character} is the function $\chi_S \colon \cube\to\{\pm1\}$ defined by $\chi_S(x) = \prod_i (-1)^{x_i}$.
The characters form a basis of the space of functions $\bR^{\cube}$.
Hence, every function $f\colon \cube\to\bR$ has a unique representation as a \emph{polynomial}: $f = \sum_{S\subseteq[n]} \alpha_S\chi_S$.  The size of the largest $S$ with non-zero $\alpha_S$ is called the \emph{degree} of $f$.  

A degree-$d$ \emph{polynomial} is any function $p\colon \cube\to\bR$ of degree at most $d$.
A degree-$d$ \emph{dual polynomial} is a function $\phi\colon\cube\to\bR$ satisfying 
\[
\sum_{x\in\cube} \abs|\phi(x)| = 1
\qqand
\sum_{x\in\cube} \phi(x)\chi_S(x) = 0\quad \text{for all $|S|\le d$}.
\]
It is easy to check that the second condition above is equivalent to the following one:
\begin{equation}
\label{eqn:dual2}
\sum_{x\sim \alpha} \phi(x) = 0\quad\text{for all assignments $\alpha$ with $|\alpha|\le d$.}
\end{equation}
Dual polynomials~\cite{sherstov:patternMethod} can be used to show inapproximability for real-valued total functions.
We may assume $d<n$, since every function can be represented by a degree-$d$ polynomial.
\begin{thm}
\label{thm:dualTotal}
Let $d<n$.  For any function $f\colon \cube\to\bR$, we have
\begin{equation}
\label{eqn:dualTotal}
\min_p \max_{x\in\cube } |f(x) - p(x)| = \max_{\phi} \sum_{x\in\cube} \phi(x)f(x),
\end{equation}
where $p$ ranges over all degree-$d$ polynomials and $\phi$ ranges over all degree-$d$ dual polynomials.
\end{thm}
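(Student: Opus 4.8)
The plan is to read both sides of \rf{eqn:dualTotal} as the optimal values of a dual pair of linear programs, so that the identity becomes an instance of strong LP duality.

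First I would write the left-hand side as an explicit LP. Introduce a real variable $\epsilon$ and, for every $S\subseteq[n]$ with $|S|\le d$, a real variable $\alpha_S$; set $p=\sum_{|S|\le d}\alpha_S\chi_S$; and minimise $\epsilon$ subject to the two families of constraints $f(x)-p(x)\le\epsilon$ and $p(x)-f(x)\le\epsilon$, one pair for each $x\in\cube$. All variables are free, there are finitely many linear constraints, the program is feasible (take $p=0$, $\epsilon=\max_x|f(x)|$), and it is bounded below (summing the two constraints at a fixed $x$ gives $2\epsilon\ge0$, so its value is $\ge0$). Hence strong duality holds and both optima are attained.

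Next I would transpose the program. Attach a multiplier $\phi^-(x)\ge0$ to the constraint $f(x)-p(x)\le\epsilon$ and $\phi^+(x)\ge0$ to $p(x)-f(x)\le\epsilon$, and put $\phi=\phi^+-\phi^-$. Matching the coefficient of $\epsilon$ gives the equation $\sum_x(\phi^+(x)+\phi^-(x))=1$; matching the coefficient of each free variable $\alpha_S$ gives $\sum_x\phi(x)\chi_S(x)=0$ for all $|S|\le d$; and the dual objective is $\sum_x\phi(x)f(x)$. It then remains to match this dual program with the maximisation over degree-$d$ dual polynomials. One direction is immediate: given a dual polynomial $\phi$, its positive and negative parts form a dual-feasible pair with $\sum_x(\phi^+(x)+\phi^-(x))=\sum_x|\phi(x)|=1$ and the same objective. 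For the converse one only has the inequality $\sum_x|\phi(x)|\le\sum_x(\phi^+(x)+\phi^-(x))=1$ at a dual-optimal point; but the LP optimum is $\ge0$ (it equals the primal optimum), so rescaling $\phi$ to $\ell_1$-norm $1$ does not decrease the objective, and the degenerate case $\phi=0$ is handled by the symmetry $\phi\mapsto-\phi$ of dual polynomials, whose collection is nonempty precisely because $d<n$. Finally, the character condition is equivalent to \rf{eqn:dual2} as already noted, so the dual program is exactly $\max_\phi\sum_x\phi(x)f(x)$ over degree-$d$ dual polynomials, and strong LP duality yields \rf{eqn:dualTotal}.

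The argument is entirely elementary; the points that need a little care are the sign bookkeeping in transposing the LP and the small normalisation step identifying the dual with the dual-polynomial LP (the $\ell_1$-norm-$1$ constraint and the degenerate case). Conceptually this is nothing but the duality formula for best $\ell_\infty$-approximation: $\min_p\|f-p\|_\infty$ equals the largest value of $\sum_x\phi(x)f(x)$ over $\phi$ of $\ell_1$-norm at most $1$ that annihilate all characters $\chi_S$ with $|S|\le d$ — a finite-dimensional Hahn--Banach statement.
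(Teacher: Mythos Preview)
Your proposal is correct and follows essentially the same LP-duality argument as the paper's appendix; the only differences are cosmetic—the paper invokes complementary slackness (rather than your rescaling step) to turn the optimal LP-dual pair into a genuine dual polynomial, and it names the normalised parity function explicitly for the $\epsilon=0$ case. One small slip in the bookkeeping you flagged: with $\phi^-$ attached to $f(x)-p(x)\le\epsilon$ and $\phi^+$ to $p(x)-f(x)\le\epsilon$, the dual objective comes out as $-\sum_x\phi(x)f(x)$, so swap the two labels.
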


Let us now turn to the case of partial functions $f\colon D\to\bool$ with $D\subseteq \cube$.  Again, we let $X = f^{-1}(1)$ and $Y = f^{-1}(0)$.

\begin{defn}
\label{defn:approx}
We say that a polynomial $p\colon\cube\to\bR$ $\eps$-approximates a partial function $f\colon D\to\bool$ with $D\subseteq \cube$ if
\begin{itemize}
\item for every $x\in D$, we have $|p(x) - f(x)|\le \eps$;
\item for every $x\in \cube$, we have $0\le p(x)\le 1$.
\end{itemize}
\end{defn}

The importance of this definition stems from the following result:
\begin{thm}[\cite{beals:pol}]
\label{thm:approx}
If a partial function $f\colon D\to\bool$ with $D\subseteq \cube$ can be evaluated by a $T$-query quantum algorithm with error at most $\eps$, then $f$ can be $\eps$-approximated by a polynomial of degree at most $2T$.
\end{thm}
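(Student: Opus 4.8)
The plan is to follow the standard ``amplitude as a low-degree polynomial'' argument of Beals \etal. Fix a $T$-query quantum algorithm $\cA$ computing $f$ with error at most $\eps$. After relabelling, $\cA$ acts on a space with orthonormal basis $\ket|i>\ket|b>\ket|z>$, where $i\in[n]$ indexes the query register, $b$ is a single qubit, and $z$ ranges over the workspace; it alternates input-independent unitaries $U_0,U_1,\dots,U_T$ with $T$ oracle calls $O_x\colon\ket|i>\ket|b>\ket|z>\mapsto\ket|i>\ket|b\oplus x_i>\ket|z>$, in the order $U_0,O_x,U_1,O_x,\dots,O_x,U_T$, and then a fixed two-outcome measurement is performed on the resulting state. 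The acceptance probability $p(x)$ is the sum of the squared absolute values of the amplitudes of this state over the ``accept'' basis states.

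First I would prove, by induction on $j$, that after $\cA$ has made $j$ queries every amplitude of its current state is a polynomial in $x\in\cube$ of degree at most $j$ (with complex coefficients). For $j=0$ this holds because the initial state $U_0\ket|0>$ has constant amplitudes. For the inductive step, an input-independent unitary replaces each amplitude by a fixed linear combination of the old amplitudes, preserving the degree bound; and the oracle $O_x$ replaces the amplitude of $\ket|i>\ket|b>\ket|z>$ by the old amplitude of $\ket|i>\ket|b\oplus x_i>\ket|z>$, which equals $(1-x_i)$ times the old amplitude of $\ket|i>\ket|b>\ket|z>$ plus $x_i$ times the old amplitude of $\ket|i>\ket|b\oplus 1>\ket|z>$ (using $x_i\in\{0,1\}$), raising the degree by at most one. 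Running this through all $T$ queries and the final $U_T$, every amplitude of the final state has degree at most $T$.

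Next I would pass to the acceptance probability. Writing each accepting amplitude in terms of its real and imaginary parts $r_k(x)$ and $s_k(x)$, which are \emph{real} polynomials of degree at most $T$, we get
\[
p(x)=\sum_k\bigl(r_k(x)^2+s_k(x)^2\bigr),
\]
a real polynomial of degree at most $2T$. Being an acceptance probability, $p$ satisfies $0\le p(x)\le 1$ for every $x\in\cube$; and since $\cA$ errs with probability at most $\eps$, we have $|p(x)-f(x)|\le\eps$ for every $x\in D$. By \rf{defn:approx}, $p$ is a degree-$\le 2T$ polynomial that $\eps$-approximates $f$, which is the claim.

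I expect the only care needed is in the bookkeeping of the inductive step: one has to check that the oracle raises the degree by at most one, which uses $x_i\in\{0,1\}$ in an essential way and fails over larger alphabets — precisely why the polynomial method is stated for Boolean functions and why a non-Boolean $f$ must first be Booleanised. Taking squared magnitudes at the very end could a priori do worse than double the degree, but splitting each amplitude into its real and imaginary parts, rather than multiplying it by its conjugate, keeps the computation real and caps the degree at $2T$.
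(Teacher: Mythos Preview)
The paper does not prove this theorem at all; it is stated with a citation to~\cite{beals:pol} and used as a black box. Your proposal correctly reproduces the standard Beals \etal\ argument that the cited reference contains, so there is nothing to compare against in the present paper.

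One small remark on your closing worry: multiplying an amplitude by its complex conjugate would not in fact threaten the degree bound. If $\alpha(x)$ is a complex polynomial of degree $\le T$ in the \emph{real} variables $x_1,\dots,x_n$, then $\overline{\alpha(x)}=\bar\alpha(x)$ is again a degree-$\le T$ polynomial (just conjugate the coefficients), so $|\alpha(x)|^2=\alpha(x)\bar\alpha(x)$ already has degree $\le 2T$. Splitting into real and imaginary parts is a perfectly good way to see that $p$ is real-valued, but it is not needed to control the degree.
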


The corresponding analogue of~\rf{thm:dualTotal} is slightly more involved.
\begin{thm}
\label{thm:dualPartial}
The best approximation distance $\eps$ as in \rf{defn:approx} of the function $f$ by a degree-$d$ polynomial is given by
\begin{equation}
\label{eqn:dualPartialObjective}
\max\sfig{\max_\phi \sC[ \sum_{x\in X} \phi^+(x) - \sum_{x\notin Y} \phi^-(x) ], 0},
\end{equation}
where the maximisation is over functions $\phi\colon \cube\to\bR$ satisfying
\begin{equation}
\label{eqn:dualPartial}
\sum_{x\in X} \phi^+(x) + \sum_{x\in Y} \phi^-(x) = 1
\qqand
\sum_{x\in \cube} \phi(x)\chi_S(x) = 0\quad \text{for all $|S|\le d$}.
\end{equation}
\end{thm}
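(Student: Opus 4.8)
The plan is to set this up as a linear program and invoke LP duality, exactly parallel to the proof of \rf{thm:dualTotal} but keeping careful track of the two one-sided constraints coming from \rf{defn:approx}. First I would write the primal: we want to minimise $\eps$ over all functions $p\colon\cube\to\bR$ of degree at most $d$ subject to the linear constraints $p(x)-f(x)\le\eps$ and $f(x)-p(x)\le\eps$ for every $x\in D$ (equivalently, for $x\in X$: $p(x)\ge 1-\eps$ and $p(x)\le 1$; for $x\in Y$: $p(x)\le\eps$ and $p(x)\ge 0$), together with $0\le p(x)\le 1$ for every $x\in\cube\setminus D$. The degree constraint is encoded, as in the preliminaries, by the linear conditions $\sum_{x}p(x)\chi_S(x)=0$ for all $|S|>d$; alternatively, and more in the spirit of this paper, one parametrises $p$ by its $\le d$ Fourier coefficients directly, so that "degree $\le d$" is built into the variable and no such constraints are needed. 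I will take the latter route to keep the dual clean.

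Next I would form the Lagrangian dual. Introduce a dual variable for each primal inequality: call them (up to sign and normalisation) $\lambda^+_x,\lambda^-_x\ge 0$ for $x\in X$, $\mu^+_x,\mu^-_x\ge 0$ for $x\in Y$, and $\rho^+_x,\rho^-_x\ge 0$ for $x\in\cube\setminus D$, where the "$+$" multiplier goes with the upper bound $p(x)\le(\text{something})$ and "$-$" with the lower bound. The key observations are: (i) stationarity in $\eps$ forces the total mass of the multipliers attached to the $\eps$-constraints to equal $1$ — this produces the normalisation $\sum_{x\in X}\phi^+(x)+\sum_{x\in Y}\phi^-(x)=1$ after one sets $\phi^+=\lambda^+-\lambda^-$ on $X$, $\phi^-=\mu^+-\mu^-$ on $Y$, and observes $|\phi^\pm|\le\lambda^++\lambda^-$ etc.; (ii) stationarity in each Fourier coefficient $\hat p(S)$ with $|S|\le d$ forces $\sum_x\phi(x)\chi_S(x)=0$, where $\phi$ is the combined signed measure built from all the multipliers — this is \rf{eqn:dualPartial}'s second condition; (iii) the objective contributed by the constant term of $f$ (i.e.\ $f=1$ on $X$, $f=0$ on $Y$, and the box constraints $0\le p\le1$ on $\cube\setminus D$) collects into $\sum_{x\in X}\phi^+(x)-\sum_{x\notin Y}\phi^-(x)$, because the lower bound $p\ge 0$ outside $Y$ contributes nothing to the objective while the upper bound $p\le 1$ there contributes $-\phi^-(x)$, matching "$x\notin Y$" in \rf{eqn:dualPartialObjective}. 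Finally the outer $\max\{\cdot,0\}$ appears because the trivial polynomial $p\equiv 1/2$ (or $p\equiv f$ if $f$ were total) always certifies $\eps\le 1/2$... more to the point, $\eps\ge 0$ always, so the dual optimum cannot be negative; whenever the dual expression \rf{eqn:dualPartialObjective} would come out negative, the correct value is $0$, witnessed by taking all multipliers zero.

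I would then finish by checking strong duality applies: the primal is feasible (take $p\equiv 1/2$, $\eps=1/2$) and bounded below by $0$, and it is a finite-dimensional linear program, so there is no duality gap — this is where \rf{thm:dualTotal}'s assumption $d<n$ gets used, to ensure the primal feasible region is genuinely constrained and the reformulation via Fourier coefficients is faithful. The main obstacle, and the part deserving the most care in writing, is bookkeeping the \emph{signs and the asymmetry between $X$ and $Y$}: the box constraint $0\le p(x)\le 1$ is symmetric on $\cube\setminus D$, mildly asymmetric on $X$ (where $f=1$ makes the "$p\le 1$" constraint coincide with a face of the $\eps$-constraint) and on $Y$ (where $f=0$ does the analogous thing), and it is exactly this asymmetry that makes the objective \rf{eqn:dualPartialObjective} read $\sum_{x\in X}\phi^+ - \sum_{x\notin Y}\phi^-$ rather than something symmetric. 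I would recommend proving it by carefully substituting $p(x)=f(x)+\text{(deviation)}$ and tracking which inequalities are active, rather than by a black-box LP-duality citation, so the reader can see where each term of \rf{eqn:dualPartial} and \rf{eqn:dualPartialObjective} comes from.
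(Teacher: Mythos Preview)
Your plan is the paper's: parametrise $p$ by its low-degree Fourier coefficients $\alpha_S$, write the primal LP with the four families of inequalities (the paper uses $p(x)\ge 1-\eps$ on $X$, $p(x)\le\eps$ on $Y$, $p(x)\ge 0$ off $X$, $p(x)\le 1$ off $Y$, with dual variables $a_x,b_x,c_x,d_x\ge 0$ respectively), form the Lagrangian, and let stationarity in $\eps$ and in each $\alpha_S$ produce the two brackets that become~\rf{eqn:dualPartial}.

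Two steps in your sketch are off, and both are precisely where the paper invokes complementary slackness. First, the line ``$\phi^+=\lambda^+-\lambda^-$ on $X$'' is wrong: the right-hand side can be negative, and in any case the signed function is $\phi(x)=a_x-d_x$ on $X$ (multiplier of $p\ge 1-\eps$ minus multiplier of $p\le 1$). From an arbitrary feasible dual point you only get $\phi^+(x)\le a_x$ on $X$ and $\phi^-(y)\le b_y$ on $Y$, so the resulting $\phi$ need not even satisfy the normalisation in~\rf{eqn:dualPartial}, let alone reproduce the dual objective. The paper fixes this by taking an \emph{optimal} dual solution with $\eps>0$ and using complementary slackness: then for each $x$ at most one of its two constraints is tight, so at most one of the two multipliers is nonzero, forcing $a_x=\phi^+(x)$ on $X$, $b_x=\phi^-(x)$ on $Y$, and $d_x=\phi^-(x)$ off $Y$; now both the constraint and the objective match exactly. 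Second, your witness for the outer $\max\{\cdot,0\}$ --- ``take all multipliers zero'' --- violates the normalisation $\sum_{x\in X}a_x+\sum_{x\in Y}b_x=1$ and is infeasible. The paper instead observes that when $\eps=0$ the $\phi$-expression can be strictly negative (it gives an explicit example at the end of the appendix), so the clamp at $0$ is a genuine correction, not a value attained by some feasible $\phi$.
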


Here $\phi^+(x) = \max\{0,\phi(x)\}$ and $\phi^-(x) = \max\{0,-\phi(x)\}$ are the positive and the negative parts of $\phi$, respectively.

The proofs of Theorems~\ref{thm:dualTotal} and~\ref{thm:dualPartial} are based on linear programming duality.
They are given in \rf{app:LP} for completeness.

\begin{proof}[Proof of \rf{thm:identically} using Dual Polynomials]
We may assume the function $f$ is Boolean.
It suffices to show that it cannot be approximated a polynomial of degree less than $2m$.
Let
\[
\phi(x) = \begin{cases}
\mu_x/2,&\text{if $x\in X$;}\\
-\nu_x/2,&\text{if $x\in Y$;}\\
0,&\text{otherwise.}
\end{cases}
\]
This function satisfies~\rf{eqn:dualPartial} with $d=2m$.
Indeed, the first condition follows from $\mu$ and $\nu$ being probability distributions, and the second one follows from~\rf{eqn:dual2} since
\[
\sum_{x\sim\alpha} \phi(x)
= \frac12 \Pr_{x\gets \mu} [x\sim \alpha] - \frac12 \Pr_{y\gets \nu} [y\sim \alpha] =0
\]
by~\rf{eqn:identically}.
The value of~\rf{eqn:dualPartialObjective} is $1/2$, meaning it is impossible to get a better than trivial approximation.
\end{proof}

\section{
\texorpdfstring{$\Delta$-decomposition of $\bR^X$}{Delta-decomposition of RX}
}
\label{sec:delta}

Let $X\subseteq[q]^n$ be a set of inputs, and let $\mu$ be some measure on $X$.
The goal of this section is to develop a decomposition of the space $\bR^X$ convenient for the $\Delta_i$ operation and that takes into account the measure $\mu$.

\paragraph{Definition of subspaces.}
For each assignment $\alpha$, define the following vector in $\bR^X$:
\[
v_\alpha = \sum_{x\sim \alpha} \sqrt{\mu_x} \ket|x>.
\]
Based on these vectors, we define a number of subspaces.
First, for $k\in\{0,1,\dots,n\}$:
\[
\Pis k = \spn_{\alpha\colon |\alpha|=k} v_\alpha.
\]

\begin{clm}
\label{clm:Pi}
We have $\Pis{k-1}\subseteq \Pis{k}$ and $\Pis{n} = \bR^X$.
\end{clm}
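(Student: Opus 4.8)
The plan is to prove both statements by direct manipulation of the generating vectors $v_\alpha$. The key observation, which I would isolate first, is a \emph{splitting identity}: if $\alpha$ is an assignment with domain $S$ and $j\in[n]\setminus S$ is any index outside $S$, then, writing $\alpha[j\mapsto a]$ for the extension of $\alpha$ that additionally sets coordinate $j$ to $a\in[q]$, we have
\[
v_\alpha = \sum_{a\in[q]} v_{\alpha[j\mapsto a]}.
\]
This holds because the set $\{x\in X: x\sim\alpha\}$ is the disjoint union, over $a\in[q]$, of the sets $\{x\in X: x\sim\alpha[j\mapsto a]\}$ (an input agreeing with $\alpha$ agrees with exactly one of the extensions, namely the one fixing $x_j$); comparing the two sides entry by entry in $\bR^X$ then gives equality.

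For the inclusion $\Pis{k-1}\subseteq\Pis k$ with $1\le k\le n$, I would take an arbitrary assignment $\alpha$ of weight $k-1$. Since $k-1\le n-1$, its domain is a proper subset of $[n]$, so there is a free index $j$, and the splitting identity expresses $v_\alpha$ as a linear combination of the vectors $v_{\alpha[j\mapsto a]}$, each of weight $k$, hence lying in $\Pis k$. As these $v_\alpha$ span $\Pis{k-1}$, the inclusion follows.

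For $\Pis n = \bR^X$, I would note that an assignment $\alpha$ of weight $n$ has domain all of $[n]$, so it fixes every coordinate and is matched by at most one input. If that input $x$ belongs to $X$ then $v_\alpha = \sqrt{\mu_x}\,\ket|x>$, and otherwise $v_\alpha = 0$; thus $\Pis n = \spn_{x\in X}\sqrt{\mu_x}\ket|x>$, which is all of $\bR^X$ (here, as throughout, we work over the support of $\mu$, equivalently assume $\mu_x>0$ for $x\in X$).

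I do not anticipate a genuine obstacle: the content is entirely in the splitting identity, and the only things to be careful about are the disjointness of the partition $\{x\sim\alpha[j\mapsto a]\}_{a\in[q]}$ and the trivial boundary check that a weight-$(k-1)$ assignment with $k\le n$ indeed leaves some coordinate free.
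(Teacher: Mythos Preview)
Your proof is correct and follows essentially the same approach as the paper: the splitting identity $v_\alpha=\sum_{a\in[q]} v_{\alpha[j\mapsto a]}$ is exactly what the paper uses for the inclusion, and the observation that a weight-$n$ assignment singles out an individual input is the paper's argument for $\Pis n=\bR^X$. Your write-up is slightly more detailed (you justify the splitting identity via disjointness and note the positivity assumption on $\mu$), but the underlying argument is identical.
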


\pfstart
Let $\alpha$ be an assignment of weight $k-1$, and $i$ be an element of $[n]$ outside the domain of $\alpha$.  Then,
\[
v_\alpha = \sum_{a\in[q]} v_{\alpha\cup\{i\mapsto a\}},
\]
proving the first claim.

For the second claim, note that an assignment $\alpha$ of weight $n$ defines an individual input.
\pfend

This gives an orthogonal decomposition of $\bR^X$ into subspaces
\[
\Pi_k = \Pi_{\le k} \cap \Pi_{\le k-1}^\perp = \Pi_{\le k} - \Pi_{\le k-1}.
\]

\paragraph{Example.}
A simple example is $X = [q]^n$ with the uniform distribution $\mu_x$.
Define two orthogonal projectors on $\bR^q$: $E_0 = J_q/q$ and $E_1 = I_q - E_0$, where $J_q$ is the all-1 matrix.  Then,
\[
\Pi_k = \sum_{s\in\cube: |s|=k} E_{s_1}\otimes E_{s_2}\otimes\cdots\otimes E_{s_n},
\]
where $|s|$ is the Hamming weight.
These operators are similar to the ones used in the construction of the adversary lower bound for element distinctness~\cite{belovs:adv-el-dist} and sum-problems~\cite{belovs:onThePower}.

\paragraph{Action of $\Delta_1$.}
Let us consider the action of $\Delta_1$, the remaining $\Delta_j$ being analogous.
For that, we define the following variant of the above subspaces:
\[
\Pis{k}' = \spn_{\alpha\colon |\alpha|=k,\, \text{$\alpha$ defined on 1}} v_\alpha.
\]
In particular, we again have $\Pis n' = \bR^X$.
However, this time $\Pis 0'$ is the empty subspace.

\begin{clm}
\label{clm:PiPrim}
We have the following:
\begin{itemize}
\item[(a)] $\Pis{k-1}'\subseteq \Pis{k}'$;
\item[(b)] $\Pis{k-1} \subseteq \Pis{k}' \subseteq \Pis k$;
\item[(c)] $\Delta_1 \circ \Pis k'=0$.
\end{itemize}
\end{clm}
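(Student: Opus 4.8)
The plan is to reduce all three parts to the single ``splitting identity'' $v_\alpha = \sum_{a\in[q]} v_{\alpha\cup\{i\mapsto a\}}$ already used in the proof of \rf{clm:Pi}, together with the elementary observation that if $\alpha$ is defined on the first coordinate, then $v_\alpha$ is supported on the set $X_a = \{x\in X : x_1 = a\}$, where $a=\alpha(1)$.

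For part~(a), I would take a generator $v_\alpha$ of $\Pis{k-1}'$, so $|\alpha|=k-1$ and $\alpha$ is defined on $1$; picking any index $i$ outside the domain of $\alpha$ (possible since $|\alpha|=k-1<n$) and applying the splitting identity expresses $v_\alpha$ as a sum of the vectors $v_{\alpha\cup\{i\mapsto a\}}$, each of which has weight $k$ and is still defined on $1$, hence lies in $\Pis k'$. Since such $v_\alpha$ span $\Pis{k-1}'$, this yields (a). For the inclusion $\Pis k'\subseteq\Pis k$ in part~(b), it suffices to note that the defining generators of $\Pis k'$ are a subset of those of $\Pis k$. For $\Pis{k-1}\subseteq\Pis k'$, take a generator $v_\alpha$ of $\Pis{k-1}$, so that $|\alpha|=k-1$. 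If $\alpha$ is already defined on $1$, then $v_\alpha\in\Pis{k-1}'\subseteq\Pis k'$ by part~(a); otherwise $1$ is outside the domain of $\alpha$, and the splitting identity with $i=1$ writes $v_\alpha$ as a sum of generators of $\Pis k'$.

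Part~(c) carries the geometric content, though it is still short. Observe that $\bR^X = \bigoplus_{a\in[q]}\bR^{X_a}$ is an orthogonal decomposition, and that the $\Delta_1$ operation erases precisely the matrix entries $(x,y)$ lying inside a single block, i.e.\ those with $x_1=y_1$; hence $\Delta_1\circ M = 0$ is equivalent to $M$ being block-diagonal with respect to this decomposition. Now set $V_a = \spn\{\,v_\alpha : |\alpha|=k,\ \alpha(1)=a\,\}$. Each $V_a$ is supported on $\bR^{X_a}$, so the $V_a$ are pairwise orthogonal and $\Pis k' = V_1\oplus\cdots\oplus V_q$. Consequently the orthogonal projector onto $\Pis k'$ is the sum of the orthogonal projectors onto the $V_a$, the $a$-th of which has all its rows and columns indexed by $X_a$; therefore $\Pis k'$ is block-diagonal and $\Delta_1\circ\Pis k'=0$.

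The only step requiring any care is this last one: one must use that the blocks $\bR^{X_a}$ are mutually orthogonal and exhaust $\bR^X$, so that a span of vectors each drawn from a single block decomposes as the orthogonal direct sum of the within-block spans, and the projector onto it decomposes accordingly into block-supported summands. The remaining arguments are purely formal manipulations of the generators $v_\alpha$ via the splitting identity.
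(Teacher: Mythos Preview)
Your proposal is correct and follows essentially the same approach as the paper: parts~(a) and~(b) are handled via the splitting identity exactly as in the proof of \rf{clm:Pi}, and part~(c) is proved by decomposing $\Pis k'$ into the blocks $V_a=\Pi'_{\le k,a}$ supported on $\{x:x_1=a\}$, which is precisely what the paper does. Your write-up of~(c) is slightly more explicit about why the projector onto $\Pis k'$ inherits the block structure, but the argument is the same.
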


\begin{proof}
The proof of (a) is analogous to the proof of \rf{clm:Pi}.

The second inclusion of (b) holds because $\Pis k'$ is a span of a subset of vectors of $\Pis k$.
To prove the first inclusion of (b), it suffices to show that an arbitrary $v_\alpha$ with $|\alpha|=k-1$ is contained in $\Pis k'$.
The proof of that is analogous to the proof of \rf{clm:Pi}.
However, this time we take $i=1$ if $\alpha$ is not defined on 1 (and an arbitrary $i$ as before, otherwise).

Now let us prove (c).
Note that $\Pi'_{\le k}$ can be written as a direct sum 
\[
\Pi'_{\le k} = \bigoplus_{b\in[q]} \Pi'_{\le k, b}\negbigskip
\]
of orthogonal projectors
\[
\Pi'_{\le k, b} = \spn_{\alpha\colon |\alpha|=k,\; \alpha(1) = b} v_\alpha.
\]
Each $\Pi'_{\le k, b}$ acts on the subspace spanned by $x\in X$ with $x_1=b$.
Hence, $\Delta_1\circ\Pi'_{\le k, b} = 0$.  By linearity, $\Delta_1\circ\Pi'_{\le k}=0$.
\end{proof}

\paragraph{Standard Form of Adversary}
As a warm-up for the next sections, we describe the following ``standard'' form of the ``adversary'' matrix on $\bR^X$:
\begin{equation}
\label{eqn:standard}
\sum_{k=0}^{m-1} \Pi_{\le k} = \sum_{k=0}^m (m-k)\Pi_k.
\end{equation}
Clearly, the norm of this matrix is $m$.  
The action of $\Delta_1$ is defined as
\begin{equation}
\label{eqn:standardGamma'}
\sum_{k=0}^{m-1} \Pi_{\le k} \stackrel{\Delta_1}{\longmapsto} \sum_{k=0}^{m-1} \sA[\Pi_{\le k} - \Pi'_{\le k}],
\end{equation}
where we use point (c) of \rf{clm:PiPrim}.

\begin{clm}
\label{clm:standard}
The norm of the operator on the right-hand side of~\rf{eqn:standardGamma'} is 1.
\end{clm}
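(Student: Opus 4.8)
The plan is to recognise the operator
\[
A \;:=\; \sum_{k=0}^{m-1}\sA[\Pis k - \Pis k']
\]
on the right-hand side of~\rf{eqn:standardGamma'} as an orthogonal projector; once this is established, $\|A\|\le 1$ is immediate, and the bound $\|A\|=1$ follows since $A\ne 0$.

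First I would note that $\Pis 0'$ is the empty subspace, so the $k=0$ summand is just $\Pis 0$, and hence $A=\sum_{k=0}^{m-1}P_k$ with $P_k:=\Pis k-\Pis k'$. By part~(b) of \rf{clm:PiPrim} we have $\Pis k'\subseteq\Pis k$, so each $P_k$ is a genuine orthogonal projector, namely the projector onto $\Pis k\cap(\Pis k')^\perp$. The one observation that makes everything work is that this range sits inside the single subspace $\Pi_k$: using the other inclusion from part~(b), $\Pis{k-1}\subseteq\Pis k'$, we get
\[
\Pis k\cap(\Pis k')^\perp\;\subseteq\;\Pis k\cap\Pis{k-1}^\perp\;=\;\Pi_k .
\]

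Since the subspaces $\Pi_0,\Pi_1,\dots,\Pi_n$ are pairwise orthogonal (this is the orthogonal decomposition of $\bR^X$ coming from \rf{clm:Pi}), the projectors $P_0,\dots,P_{m-1}$ are pairwise orthogonal as well, i.e.\ $P_kP_\ell=0$ for $k\ne\ell$. A sum of pairwise orthogonal projectors is again an orthogonal projector, so $A$ is the projector onto $\bigoplus_{k=0}^{m-1}P_k$, whence $\|A\|\le 1$. Finally, $P_0=\Pis 0=\spn v_\emptyset\ne 0$ (as $X\ne\emptyset$ and $\mu$ is not identically zero), so $A\ne 0$ and therefore $\|A\|=1$.

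In short, I expect no real obstacle here: the whole argument rests on the refined chain $\Pis{k-1}\subseteq\Pis k'\subseteq\Pis k$ supplied by \rf{clm:PiPrim}(b), which confines each difference $\Pis k-\Pis k'$ to one level $\Pi_k$ of the decomposition and thereby makes the $P_k$ mutually orthogonal; the only subtlety worth stating explicitly is that $\Pis 0'$ vanishes, so that the truncated sum really is a sum of projectors rather than of signed differences.
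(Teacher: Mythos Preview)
Your proof is correct and follows essentially the same route as the paper: you use the chain $\Pis{k-1}\subseteq\Pis k'\subseteq\Pis k$ from \rf{clm:PiPrim}(b) to conclude that the differences $P_k=\Pis k-\Pis k'$ are pairwise orthogonal projectors, hence their sum is a projector of norm~$1$. The paper phrases the orthogonality slightly differently (observing that the range of $P_k$ lies in $\Pis k\subseteq\Pis{k+1}'$, which is killed by $P_{k+1}$), while you route the ranges through the layers $\Pi_k$; these are the same observation.
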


\pfstart
The operator in question is a sum of projectors $\Pis k - \Pis k'$.
By point (b) of \rf{clm:PiPrim}, we know that $\Pis k$ is contained in $\Pis {k+1}'$.
Hence, these projectors are pairwise orthogonal, and the norm of the operator is 1.
\pfend

In the following section, we will mimic this construction but $X\times Y$-matrices.

\section{Second Proof of \rf{thm:identically}}
\label{sec:2sets}
In this section, we give a proof of \rf{thm:identically}, which is based on the adversary method.



We use $v^X_\alpha$ and $v^Y_\alpha$ to denote vectors defined on the input sets $X$ and $Y$:
\[
v^X_\alpha = \sum_{x\in X: x\sim\alpha} \sqrt{\mu_x} \ket|x>
\qqand
v^Y_\alpha = \sum_{y\in Y: y\sim\alpha} \sqrt{\nu_y} \ket|y>,
\]
and similarly for $\Pis k^X$, $\Pis k^Y$, $\Pis k^{\prime X}$ and $\Pis k^{\prime Y}$.

Let $\alpha$ and $\beta$ be assignments of weight at most $m$. 
Note that
\[
\ip<v^X_\alpha, v^X_\beta> 
= \Pr_{x\sim \mu} [x\sim \alpha \wedge x\sim\beta]
= \Pr_{y\sim \nu} [y\sim \alpha \wedge y\sim\beta]
= \ip<v^Y_\alpha, v^Y_\beta>.
\]
Indeed, either $\alpha$ and $\beta$ contradict each other, in which case the both sides of the above equality are zero, or they can be merged into one assignment of size at most $2m$, in which case~\rf{eqn:identically} applies.

Hence, by \rf{lem:isometry}, there exists a linear operator $W$ that maps $v^Y_\alpha$ into $v^X_\alpha$ for each $|\alpha|\le m$.
It is an isometry from $\Pi_{\le m}^Y$ onto $\Pi_{\le m}^X$, and it induces the following isometries:
\[
W_{\le k}\colon \Pi_{\le k}^Y \to \Pi_{\le k}^X
\qqand
W'_{\le k}\colon \Pi_{\le k}^{\prime Y} \to \Pi_{\le k}^{\prime X}
\]
for all $k\le m$.

\begin{clm}
\label{clm:WPrim}
We have $\Delta_1\circ W'_{\le k} = 0$.
\end{clm}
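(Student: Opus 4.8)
The plan is to mimic the proof of point (c) of \rf{clm:PiPrim}, transferring the structure to the two-sided setting via the isometry $W$. Concretely, I would first recall that, just as $\Pi'_{\le k}$ decomposes over the possible values $b\in[q]$ of the first coordinate, so do both $\Pi^{\prime X}_{\le k}$ and $\Pi^{\prime Y}_{\le k}$; write $\Pi^{\prime X}_{\le k} = \bigoplus_{b\in[q]} \Pi^{\prime X}_{\le k,b}$ and $\Pi^{\prime Y}_{\le k} = \bigoplus_{b\in[q]} \Pi^{\prime Y}_{\le k,b}$, where the $b$-th block is the span of the $v^X_\alpha$ (resp. $v^Y_\alpha$) over assignments $\alpha$ of weight $k$ with $\alpha(1)=b$.

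The key observation is that $W$ respects this decomposition: since $W$ sends $v^Y_\alpha$ to $v^X_\alpha$ for every $\alpha$ with $|\alpha|\le m$, and in particular for every $\alpha$ of weight $k\le m$ with $\alpha(1)=b$, the restricted isometry $W'_{\le k}$ maps $\Pi^{\prime Y}_{\le k,b}$ isometrically onto $\Pi^{\prime X}_{\le k,b}$ for each fixed $b$. Hence $W'_{\le k} = \bigoplus_{b\in[q]} W'_{\le k,b}$, where $W'_{\le k,b}$ maps the $y\in Y$ with $y_1=b$ into the span of the $x\in X$ with $x_1=b$. For each such block, every matrix entry $(x,y)$ of $W'_{\le k,b}$ has $x_1 = y_1 = b$, so applying $\Delta_1$ (which erases exactly the entries with $x_1=y_1$) annihilates it: $\Delta_1\circ W'_{\le k,b} = 0$. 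By linearity over the direct sum, $\Delta_1\circ W'_{\le k} = 0$, which is the claim.

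There is essentially no serious obstacle here; the only point that needs a little care is making sure the block decomposition of $W'_{\le k}$ is genuinely a direct sum of the block maps — i.e. that $W'_{\le k}$ does not mix the $b$-blocks. This follows because $W$ is defined on basis vectors $v^Y_\alpha$ and each such vector with $\alpha(1)=b$ lies in the $b$-th source block and is sent to $v^X_\alpha$, which lies in the $b$-th target block; since the source blocks are mutually orthogonal and span $\Pi^{\prime Y}_{\le k}$, and the target blocks are mutually orthogonal, $W'_{\le k}$ is block-diagonal with respect to this pair of decompositions. The remaining argument about $\Delta_1$ killing a matrix all of whose nonzero entries sit at positions with equal first coordinate is immediate from the definition of $\Gamma\circ\Delta_j$. (The argument for general $\Delta_j$ in place of $\Delta_1$ is identical, using the variant of the primed subspaces defined on coordinate $j$.)
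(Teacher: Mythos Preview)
Your proposal is correct and follows exactly the approach the paper intends: the paper's own proof is the single sentence ``The proof is similar to the point (c) of \rf{clm:PiPrim},'' and your write-up spells out precisely that analogy, decomposing $W'_{\le k}$ block-diagonally over the value $b=\alpha(1)$ and observing each block is killed by $\Delta_1$. There is nothing to add.
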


\pfstart
The proof is similar to the point (c) of \rf{clm:PiPrim}.
\pfend

We use the adversary matrix similar to~\rf{eqn:standard}, but this time this is an $X\times Y$ matrix
\begin{equation}
\label{eqn:identicalGamma}
\Gamma = \sum_{k=0}^{m-1} W_{\le k} = W\sD[ {\sum_{k=0}^{m-1} \Pi^Y_{\le k} }].
\end{equation}
Since $W$ is an isometry, the norm of $\Gamma$ is $m$.
Again, the action of $\Delta_1$ is defined by
\[
\Gamma \stackrel{\Delta_1}{\longmapsto} 
\sum_{k=0}^m \sA[ W_{\le k} - W'_{\le k}] = 
W\sD[ {\sum_{k=0}^m \sA[\Pi^Y_{\le k} - \Pi^{\prime Y}_{\le k}] }],
\]
and the norm of the latter matrix is 1 by \rf{clm:standard} and using that $W$ is an isometry again.
At this point, we can use \rf{thm:adv}.
\medskip

The preceding derivation also works in the case when $\mu$ and $\nu$ have overlapping supports, or not totally concentrated on $X$ and $Y$.
In this case, it could be easier to prove quantum indistinguishability of $\mu$ and $\nu$ directly using \rf{thm:advDistrib}.
For this theorem, the following result is useful:

\begin{prp}
\label{prp:nuGammamu}
If $\mu$ and $\nu$ are probability distributions, and $\Gamma$ is defined as in~\rf{eqn:identicalGamma}, then
\[
\|\Gamma\| = \delta_\mu^* \Gamma \delta_\nu\* = m,
\]
where $\delta_\mu$ and $\delta_\nu$ are defined in~\rf{eqn:deltamu}.
\end{prp}

\pfstart
Observing the right-hand side of~\rf{eqn:standard} and the definition of $\Gamma$, we see that the maximal singular value of $\Gamma$ is $m$, with the corresponding left and right singular vectors lying in $\Pi^X_0$ and $\Pi^Y_0$, respectively.
By definition, $\Pi^X_0 = \Pis 0^X$ projects onto $v^X_\emptyset = \delta_\mu$ and
$\Pi^Y_0 = \Pis 0^Y$ projects onto $v^Y_\emptyset = \delta_\nu$, where $\emptyset$ denotes the empty assignment.
\pfend

\section{Polynomial Lower Bounds}
\label{sec:polynomial}
In this section, we demonstrate a direct conversion of a polynomial lower bound into an adversary lower bound.  
We do so by taking a dual polynomial that witnesses degree at least $d$ and convert it into an adversary bound of value $\Omega(d)$.

For warm-up, we consider the case of total functions in \rf{sec:total}, and then the general case of partial functions in \rf{sec:partial}.
In both cases, we use the distributional version of the adversary bound, \rf{thm:advDistrib}, which we find conceptually more appropriate in this case.
However, it is not hard to reduce to the usual version of the bound, \rf{thm:adv}, as well, which we do in \rf{sec:usual}.

\subsection{Total Functions}
\label{sec:total}
We start with the case when $f\colon\cube\to\bool$ is a total Boolean function.
Assume it cannot be $1/3$-approximated by a polynomial of degree $d$.
In this case, we can use \rf{thm:dualTotal}.
Let $\phi$ be a degree-$d$ dual polynomial attaining the maximum in~\rf{eqn:dualTotal}.
Thus,
\begin{equation}
\label{eqn:Totaldual1}
\sum_{x\in\cube} \phi(x)f(x) \ge 1/3.
\end{equation}
Our goal is to prove an adversary lower bound of $\Omega(d)$.

Let us define
\begin{equation}
\label{eqn:tXandtY}
\tX = \{x\in \cube \mid \phi(x)\ge 0 \}
\qqand
\tY = \{y\in \cube \mid \phi(y)< 0 \},
\end{equation}
and two measures
\[
\mu\colon \tX\to \bR,\, x\mapsto 2\phi(x)
\qqand
\nu\colon \tY\to \bR,\, y\mapsto -2\phi(y).
\]
From~\rf{eqn:dual2} applied to empty $\alpha$, we get that $\sum_x \phi(x)=0$.  Also, $\sum_x |\phi(x)|=1$.  Hence,
\begin{equation}
\label{eqn:muandnuAreProbabilities}
\sum_{x\in\tX} \mu_x = \sum_{y\in\tY} \nu_y = 1,
\end{equation}
that is, both $\mu$ and $\nu$ are probability distributions.

Using~\rf{eqn:dual2} again, we get that for each assignment $\alpha$ of weight at most $d$, we have
\begin{equation}
\label{eqn:dualidentitical}
\Pr_{x\gets \mu} [x\sim \alpha] = \Pr_{y\gets \nu} [y\sim \alpha].
\end{equation}
Thus, by~\rf{thm:identically}, the quantum query complexity of distinguishing $\tX$ and $\tY$ is $\Omega(d)$.
This is a nice development, but we would really like to prove the same result for the sets $X = f^{-1}(1)$ and $Y=f^{-1}(0)$.
Luckily, by condition~\rf{eqn:Totaldual1}, these sets are sufficiently well correlated.


Take $\tGamma$ as in~\rf{eqn:identicalGamma} with sets $\tX$ and $\tY$ and distributions $\mu$ and $\nu$.
By the results of \rf{sec:2sets}, we get that
\begin{equation}
\label{eqn:normtGamma}
\|\tGamma\| = \delta_\mu^* \tGamma \delta_\nu\* = d/2,
\qqand
\|\tGamma\circ\Delta_j\|\le 1\quad \text{for all $j\in[n]$}.
\end{equation}

Let $\cA$ be a quantum algorithm that evaluates the function $f$.  
We apply \rf{thm:advDistrib} on this algorithm and distributions $\mu$ and $\nu$.
Note that~\rf{eqn:Totaldual1} is equivalent to
\[
\sum_{x\in f^{-1}(1)} \mu_x - \sum_{y\in f^{-1}(1)} \nu_y \ge 2/3.
\]
This is the difference between the ``ideal'' acceptance probabilities of $\cA$ on $\mu$ and $\nu$, i.e, in the hypothetical case when the algorithm never errs.
Assuming the actual error of the algorithm $\cA$ is at most 1/6, we get that
\[
s_\mu - s_\nu \ge 1/3
\]
in notations of \rf{thm:advDistrib}.
From~\rf{eqn:tau}, we get that $\tau(s_\mu, s_\nu) \le 1-\Omega(1)$.
Pluging this and~\rf{eqn:normtGamma} into~\rf{eqn:adv}, we get that the query complexity of $\cA$ is $\Omega(d)$.

\subsection{Partial Functions}
\label{sec:partial}
Now let us consider the case of partial functions, for which we have to use \rf{thm:dualPartial}.  Again, assume that $\phi$ is optimal.
Then we have from~\rf{eqn:dualPartialObjective}:
\begin{equation}
\label{eqn:partialDual}
\sum_{x\in X} \phi^+(x) - \sum_{x\notin Y} \phi^-(x)  \ge 1/3.
\end{equation}
The sets $\tX$ and $\tY$ are still defined as in~\rf{eqn:tXandtY}.
In order to define $\mu$ and $\nu$, we have to choose a different scaling factor.

By~\rf{eqn:dualPartial}, we still have that
$
\sum_{x\in\cube} \phi^+(x) = \sum_{x\in\cube} \phi^-(x).
$
Also
\[
\sum_{x\in\cube} \phi^-(x) = 
\sum_{x\in Y} \phi^-(x) + \sum_{x\notin Y} \phi^-(x) 
\le \sum_{x\in Y} \phi^-(x) + \sum_{x\in X} \phi^+(x) -1/3
= 2/3,
\]
where we used~\rf{eqn:partialDual} and the first condition from~\rf{eqn:dualPartial}.
Let us denote the left-hand side of the above inequality by $M$.
Then, we can define probability distributions
\[
\mu\colon \tX\to \bR,\, x\mapsto \phi(x)/M
\qqand
\nu\colon \tY\to \bR,\, y\mapsto -\phi(y)/M.
\]
So that~\rf{eqn:partialDual} becomes
\begin{equation}
\label{eqn:mu-nu}
\sum_{x\in X} \mu_x - \sum_{y\notin Y} \nu_y \ge 1/2.
\end{equation}

The equation~\rf{eqn:dualidentitical} still holds, and we use the same construction of $\tGamma$, which still satisfies~\rf{eqn:normtGamma}.

Let $\cA$ be an algorithm that evaluates $f$ with error $\eps$.
Denote by $p_x$ the acceptance probability of the algorithm on input $x\in\cube$.
So, we have $p_x \ge 1-\eps$ for $x\in X$,  $p_x \le \eps$ for $x\in Y$, and $0\le p_x\le 1$ for all $x$.
Thus,
\begin{align*}
s_\mu - s_\nu 
&=
\sum_{x\in X} \mu_x p_x + \sum_{x\notin X} \mu_x p_x
- \sum_{y\in Y} \nu_y p_y - \sum_{y\notin Y} \nu_y p_y\\
&\ge (1-\eps) \sum_{x\in X} \mu_x 
- \eps  \sum_{y\in Y} \nu_y - \sum_{y\notin Y} \nu_y
 \ge \sum_{x\in X} \mu_x - \sum_{y\notin Y} \nu_y - 2\eps \ge 1/2 - 2\eps\ge 1/4,
\end{align*}
assuming $\eps\le 1/8$.

In the same way as in \rf{sec:total}, \rf{thm:advDistrib} implies that the query complexity of $\cA$ is $\Omega(d)$.

\subsection{Usual Version of the Adversary}
\label{sec:usual}
Here we give the same proof using the usual formulation of the adversary bound, \rf{thm:adv}.
Assume we are in the general case of partial functions of \rf{sec:partial}.
We define
\[
\Gamma = \tGamma\elem[X,Y].
\]
As $\Gamma$ is a sub-matrix of $\tGamma$, we get $\|\Gamma\circ\Delta_j\|\le 1$ for all $j$ from~\rf{eqn:normtGamma}.
It suffices to show that $\|\Gamma\| = \Omega\sA[\|\tGamma\|]$.

We know that $\tGamma\delta_\nu = \|\tGamma\|\delta_\mu$ by \rf{prp:nuGammamu}.
This gives us
\[
\norm| \tGamma\elem[X,\cube]\;\delta_\nu | =
\normA|\tGamma|\cdot \normA|\delta_\mu\elem[X] |.
\]
On the other hand,
\[
\norm| \tGamma\elem[X,\cube]\;\delta_\nu |
\le 
\norm| \tGamma\elem[X,Y]\; \delta_\nu \elem[Y] | + \norm| \tGamma\elem[X,\overline Y]\; \delta_\nu \elem[\overline Y] |
\le \normA| \tGamma\elem[X,Y] | + \normA| \tGamma| \cdot \norm|\delta_\nu \elem[\overline Y] |,
\]
where $\overline Y = \cube\setminus Y$.  
Thus,
\[
\normA| \tGamma\elem[X,Y] | \ge 
\normA|\tGamma| \sB[{\normA|\delta_\mu\elem[X] | - \norm|\delta_\nu \elem[\overline Y] |}]
= \normA|\tGamma| \frac{\normA|\delta_\mu\elem[X] |^2 - \norm|\delta_\nu \elem[\overline Y] |^2}{\normA|\delta_\mu\elem[X] | + \norm|\delta_\nu \elem[\overline Y] |}.
\]
From~\rf{eqn:mu-nu}, we get that
\[
\normA|\delta_\mu\elem[X] |^2 - \norm|\delta_\nu \elem[\overline Y] |^2 = \sum_{x\in X} \mu_x - \sum_{y\notin Y} \nu_y \ge 1/2.
\]
Also, $\normA|\delta_\mu\elem[X] | + \norm|\delta_\nu \elem[\overline Y] |\le 2$, hence, we obtain
\[
\normA| \tGamma\elem[X,Y] | 
\ge \frac14 \normA|\tGamma|,
\]
as required.

\section*{Acknowledgements}
The author is thankful to Shalev Ben-David for the suggestion to apply this construction to the polynomial method.

This work has been supported by the ERDF project number 1.1.1.5/18/A/020 ``Quantum algorithms: from complexity theory to experiment.''

\bibliographystyle{habbrvM}
{
\small
\bibliography{belov}
}

\appendix

\section{Linear Programming for Dual Polynomials}
\label{app:LP}

\subsection{Proof of \rf{thm:dualTotal}}
The left-hand side of~\rf{eqn:dualTotal} is equal to the optimal value of the following linear optimisation problem:
\begin{subequations}
\begin{alignat}{3}
 {\mbox{\rm minimise }} &\quad& \eps   \notag\\ 
 {\mbox{\rm subject to }} && f(x) - \sum_S \alpha_S \chi_S(x) &\le \eps &\quad& \text{for all $x\in\cube$;} \label{eqn:11} \\
 && f(x) - \sum_S \alpha_S \chi_S(x) &\ge -\eps && \text{for all $x\in\cube$;}\label{eqn:12} \\
 && \alpha_S&\in \bR &&\text{for all $S\subseteq [n],\; |S|\le d$;}\notag\\ 
 &&\eps&\in\bR. \notag
\end{alignat}
\end{subequations}

Let us write the Lagrangian with the dual variables $a_x\ge 0$ for~\rf{eqn:11} and $b_x\ge 0$ for~\rf{eqn:12}:
\begin{equation}
\label{eqn:1Lagrangian}
\eps - \sum_x a_x\sC[ \eps - f(x) + \sum_S \alpha_S \chi_S(x) ] 
     - \sum_x b_x\sC[ \eps + f(x) - \sum_S \alpha_S \chi_S(x) ]
\end{equation}
Let us denote $\phi(x) = a_x - b_x$, so that we can rewrite the last expression as
\begin{equation}
\label{eqn:1Lagrangian2}
\sum_x \phi(x)f(x) + \eps\sC[ 1 - \sum_x a_x -\sum_x b_x ] - \sum_S \alpha_S \sC[\sum_x \phi(x)\chi_S(x)].
\end{equation}
In the dual optimisation problem, all of the brackets in~\rf{eqn:1Lagrangian2} must be zero.

We can turn any dual polynomial into a feasible solution to the dual~\rf{eqn:1Lagrangian2} by taking $a_x = \phi^+(x)$ and $b_x = \phi^-(x)$.

For the opposite direction, consider optimal primal and dual solutions, whose values are equal due to strong duality.
If $\eps>0$, then, by complementary slackness, at most one of $a_x$ and $b_x$ is non-zero for each $x$, therefore, $|\phi(x)| = a_x + b_x$.
Hence, $\phi$ is a dual polynomial satisfying $\sum_x \phi(x)f(x)=\eps$.
If $\eps=0$, we can take $\phi$ equal to the normalised parity function.

\subsection{Proof of~\rf{thm:dualPartial}}
In this case, we have the following linear programming problem:

\begin{subequations}
\begin{alignat}{3}
 {\mbox{\rm minimise }} &\quad& \eps   \notag\\ 
 {\mbox{\rm subject to }} 
 && \sum_S \alpha_S \chi_S(x) &\ge 1-\eps &\quad& \text{for all $x\in X$;} \label{eqn:21} \\
 && \sum_S \alpha_S \chi_S(x) &\le \eps &\quad& \text{for all $x\in Y$;} \label{eqn:22} \\
 && \sum_S \alpha_S \chi_S(x) &\ge 0 &\quad& \text{for all $x\notin X$;} \label{eqn:23} \\
 && \sum_S \alpha_S \chi_S(x) &\le 1 &\quad& \text{for all $x\notin Y$;} \label{eqn:24} \\
 && \alpha_S&\in \bR &&\text{for all $S\subseteq [n],\; |S|\le d$;}\notag\\ 
 &&\eps&\in\bR. \notag
\end{alignat}
\end{subequations}
Let us write the Lagrangian with the dual variables $a_x,b_x,c_x,d_x\ge 0$ for~\rf{eqn:21}---\rf{eqn:24}, respectively:
\begin{equation}
\label{eqn:2Lagrangian}
\begin{aligned}
\eps\quad &-& \sum_{x\in X} a_x\sC[\sum_S \alpha_S \chi_S(x) - 1 + \eps] \quad
     &-& \sum_{x\in Y} b_x\sC[\eps - \sum_S \alpha_S \chi_S(x) ] \\
     &-& \sum_{x\notin X} c_x\sC[\sum_S \alpha_S \chi_S(x) ] \quad
     &-& \sum_{x\notin Y} d_x\sC[1-\sum_S \alpha_S \chi_S(x) ]
\end{aligned}
\end{equation}
Let us define
\[
\phi(x) =
\begin{cases}
a_x - d_x & \text{if $x\in X$;}\\
c_x - b_x & \text{if $x\in Y$;}\\
c_x - d_x & \text{if $x\notin X\cup Y$.}\\
\end{cases}
\]
Then, we can rewrite~\rf{eqn:2Lagrangian} as
\begin{equation}
\label{eqn:LastDualityJedi}
\sum_{x\in X} a_x - \sum_{x\notin Y} d_x 
+ \eps \sC[1- \sum_{x\in X} a_x - \sum_{x\in Y} b_x]
- \sum_S \alpha_S \sC[\sum_{x\in \cube} \phi(x)\chi_S(x)].
\end{equation}
Again, in the dual optimisation problem, all the brackets in~\rf{eqn:LastDualityJedi} must be zero.

If $\phi$ satisfies~\rf{eqn:dualPartial}, then we can take $a_x = \phi^+(x)$ for $x\in X$, $b_x = \phi^-(x)$ for $x\in Y$, $c_x = \phi^+(x)$ for $x\notin X$, and $d_x = \phi^-(x)$ for $x\notin Y$, and get a feasible solution to the dual.

For the opposite direction, consider optimal primal and dual solutions, whose values are equal by strong duality.
We may assume $\eps>0$.
By complementary slackness, for each $x$, at most one of the dual variables is non-zero, except for the case when $\eps=1/2$, in which case both $a_x$ and $b_x$ can be non-zero.
Either way, we get $a_x = \phi^+(x)$ for $x\in X$, $b_x = \phi^-(x)$ for $x\in Y$, and $d_x = \phi^-(x)$ for $x\notin Y$.
Thus we obtain the required dual formulation of \rf{thm:dualPartial}.

Let us note that maximisation with 0 is required in~\rf{eqn:dualPartialObjective}.
For example, consider the case $d=n-1$, and $X$ and $Y$ are of size 1.
The function can be approximated by a polynomial of degree at most 1, thus $\eps=0$.
On the other hand, by the second condition of~\rf{eqn:dualPartial}, $\phi$ must be equal to a multiple of the parity function.
It is easy to see that $\sum_{x\in X} \phi^+(x) - \sum_{x\notin Y} \phi^-(x)$ is actually negative in this case.

\end{document}